\newcommand{\zo}{{\{0,1\}}}
\newcommand{\PAR}{{\sf {PAR}}}
\newcommand{\dist}{{\mathrm{dist}}}
\newtheorem{theorem}{Theorem}
\newtheorem{corollary}[theorem]{Corollary}
\newtheorem{lemma}[theorem]{Lemma}
\newtheorem{claim}[theorem]{Claim}
\title{The non-adaptive query complexity of testing $k$-parities}
\author{Harry Buhrman\thanks{CWI and University of Amsterdam, The Netherlands. Partially supported by the European Commission under the project QCS (Grant No.~255961).}
\and
David Garc\'ia-Soriano\thanks{Yahoo! Research, Barcelona, Spain. Part of this work completed while at CWI Amsterdam.}
\and
Arie Matsliah\thanks{Google, Mountain View, USA.}
\and
Ronald de Wolf\thanks{CWI and University of Amsterdam, The Netherlands. Partially supported by a Vidi grant from the Netherlands Organization for Scientific Research (NWO), and by the European Commission under the project QCS (Grant No.~255961).}
}
\begin{document}

\maketitle

\begin{abstract}
We prove tight bounds of $\Theta(k \log k)$ queries for non-adaptively testing whether a function $f:\zo^n \to \zo$ is a $k$-parity or far from any $k$-parity. 
The lower bound combines a recent method of Blais, Brody and Matulef~\cite{bbm} to get lower bounds for testing from communication complexity with an $\Omega(k\log k)$ lower bound for the one-way communication complexity of $k$-disjointness.
\end{abstract}

\section{Introduction}

A \emph{parity} is a function $f:\zo^n \to \zo$ that can be written as $f(y)=\langle x, y \rangle$, the inner product (mod~2) of~$y$ with some fixed string $x$.
We also sometimes denote this function by $\chi_x$. 
It corresponds to the Hadamard encoding of the word~$x$.
We call $f$ a \emph{$k$-parity} (equivalently, a parity of \emph{size}~$k$) if $x$ has Hamming weight~$k$ and a
\emph{$\leq k$-parity} if the Hamming weight is at most~$k$. The indices of 1-bits in $x$ are called the \emph{influential variables} of~$f$. We consider the following testing problem:
\begin{quote}
Let $1\le k\le n$ be integers.
Given oracle access to a Boolean function $f:\zo^n \to \zo$, how many queries to $f$ do we need to
test (i.e.,\ determine with probability $1-\delta$, for some ``confidence parameter'' $\delta$, typically $\delta=1/3$) whether~$f$ is a $k$-parity or far from any $k$-parity?
\end{quote}
Here a function $f$ is \emph{far} from a set of functions~$G$ if for all $g\in G$,
the functions $f$ and $g$ differ on at least a constant fraction of their domain $\zo^n$.
For concreteness one can take this constant (the ``proximity parameter'') to be 1/10.
Let $\PAR^n_k$ denote the set of all $k$-parities on $n$-bit inputs, $\PAR^n_{\leq k}=\cup_{\ell\leq k}\PAR^n_\ell$, and $\PAR^n=\PAR^n_{\leq n}$.

Another way of looking at the problem is as determining, by making as few queries as possible to the Hadamard encoding of a word $x$, whether
$|x| = k$ or not.  So the task is essentially how to decide if $|x|=k$ efficiently if we can query the XOR of arbitrary subsets of the bits of $x$.%
\footnote{Decision trees where the queries are allowed to be XORs of subsets of the inputs have appeared in the literature~\cite{parity_complexity}.}

It is easy to see that deciding if the size of a parity is $k$ is the same problem as deciding if it
is $n - k$ (replace queries to $x\in \zo^n$ with queries to $1^n \oplus x$). Hence we will assume $k\leq n/2$.
For even $n$, the case $k = n / 2$ is particularly interesting because
it enables us to verify the equality between the sizes of two unknown parities $f, g \in \PAR^n$. Indeed, define a parity on
$2n$ variables by $h(x_1 x_2) = f(1^n \oplus x_1) \oplus g(x_2)$, where $x_1, x_2 \in \zo^n$; then
$h \in \PAR_n^{2 n}$ if and only if $f$ and $g$ are parities of the same size.

A related problem is deciding if a parity has size \emph{at most} $k$ (naturally, this is equivalent to deciding if the size is at least $n - k$, or at most $n - k - 1$).
Upper bounds for this task imply upper bounds for testing $k$-parities\footnote{Note that one-sided error is not preserved under this reduction.}: one can perform one test to verify the condition $|x| \le k$ and another one for $|x| \le k - 1$.
Furthermore, lower bounds for testing $k$-parities (as well as lower bounds for testing $\le k$-parities)
imply matching lower bounds for testing \emph{$k$-juntas}, i.e., functions that depend on at most~$k$ variables. This is
because one way of testing if~$f \in \PAR^n_{\le k}$ ($f \in \PAR^n_{k}$) is testing that $f$ is
linear and also a $k$-junta (but not a $(k-1)$-junta).

The first step towards analyzing the hardness of these problems was taken by Goldreich~\cite[Theorem 4]{obdds}, who proved that
testing if a linear function $f\in \PAR^n$ ($n$ even) is in $\PAR^n_{\le n/2}$ requires $\Omega(\sqrt n)$ queries.
Goldreich conjectured that the true bound should be $\Theta(n)$.
Later Blais et al.~\cite{bbm} showed that testing if a function $f$ is a $k$-parity requires $\Omega(k)$ queries.

In this paper we focus on \emph{non-adaptive} testing, where all queries to $f$ are chosen in
advance.  Our main results are tight upper and lower bounds of $\Theta(k\log k)$ non-adaptive queries
for testing whether $f$ is in, or far from, the set $\PAR^n_k$. 
Section~\ref{sec:upper} describes our upper bound and Section~\ref{sec:lower} describes our lower bound.
The lower bound combines a recent method of Blais, Brody and Matulef~\cite{bbm} to get lower bounds for testing from communication complexity, with a $\Theta(k\log k)$ bound for the one-way communication complexity of $k$-disjointness.

\paragraph{History and related work.}
This work started in 2006, when Pavel Pudlak asked HB if a lower bound on the one-way communication
complexity of $k$-disjointness was known.  He was interested in this question in order to obtain lower bounds for monotone circuits and span programs. This question prompted RdW to prove the lower bound part of
Theorem~\ref{thm:onewaycclowerbound} in October 2006.  That proof was unpublished until now, because
Pudlak managed to obtain stronger bounds without communication complexity. In April 2011, after the
Blais et al.\ paper~\cite{bbm} had come out as a preprint, we noticed that its approach could be
combined with lower bounds on one-way communication to obtain lower bounds on non-adaptive testers;
and in particular, that Theorem~\ref{thm:onewaycclowerbound} implied a $\Omega(k\log k)$ lower bound for non-adaptively testing $k$-parities. We added a matching upper bound, and these results were included in DGS's PhD thesis~\cite{garciasoriano:phd} (defended April 25, 2012) and subsequently turned into this paper.

In the mean time, in April 2009 Mihai P\v{a}tra\c{s}cu wrote a blog post~\cite{patrascu:disjblog} that stated the $\Omega(k\log k)$ communication lower bound as a folklore result, acknowledging David Woodruff for communicating this result to him  ``many years ago.''  We were not aware of this post until September 2012, when David Woodruff pointed us to it.
Neither were Dasgupta, Kumar, and Sivakumar, who independently published the $\Omega(k\log k)$ communication lower bound in~\cite{dks:sparsedisj}, nor were various other experts we had talked to earlier.

The observation that the lower-bound approach of~\cite{bbm} can be combined with \emph{one-way} communication lower bounds to obtain \emph{non-adaptive} testing lower bounds was made independently by a number of people as well, including (as an anonymous referee informed us) by ``Paul Beame, Shubhangi Saraf, and Srikanth Srinivasan in June'11 and by David Woodruff and Grigory Yaroslavtsev in May'12.''
Goldreich makes the same observation in a very recent survey~\cite{goldreich:ccproptesting}.

\section{Upper bounds}\label{sec:upper}
It is well known that testing membership in $\PAR^n_k$ can be done with $O(\log{n \choose k})$
non-adaptive queries (see, e.g.,~\cite[Proposition 5.1]{fiso_journal}), albeit with running time roughly $n^k$.
When $k$ is small ($k=n^{o(1)}$), the query and time complexities can be
improved considerably, and this is the main case of interest.

Here we prove that $O(k \log k)$ queries suffice to non-adaptively test if $f:\zo^n \to \zo$ is a
$k$-parity. We remark that the running time of our test is $n \cdot \mathrm{poly}(k)$. Furthermore, the same analysis
 can be easily generalized to prove similar upper bounds for counting the number of relevant
 variables of any $k$-junta
that is far from being a $(k-1)$-junta (in which case the influence of all its variables is bounded
 from below by  a constant). 

For $k = O(1)$ the result is easy to establish; for instance we can use the non-adaptive
isomorphism tester of Fischer et al.~\cite{juntas}, since any query complexity that is a
function of $k$ alone is fine for constant $k$. Thus below we assume $k = \omega(1)$.

First we design a tester for the special case $n = 100k^2$, and then we show how the general case
reduces to this special case. We will repeatedly invoke a simple case of the standard Chernoff bound, 
which says that if $X_1,\ldots,X_m$ are i.i.d.\ Boolean random variables each with expectation~$p$, 
then with high probability their sum is close to its expectation~$pm$:
\begin{equation}\label{eq:chernoff}
\Pr\left[ \sum_{i=1}^m X_i < (p - \epsilon)m \right] < \exp(-2\epsilon^2 m). 
\end{equation}
The basic ingredient we use is the \emph{influence test} (see~\cite[Section 3]{juntas}).
\begin{claim}\label{sec:inf_test} {\bf Influence test\ }
Let $f:\zo^n \to \zo$ be a parity function with $J \subseteq [n]$ being the set of its influential variables. There is a probabilistic procedure $I_f:\zo^n \to \zo$
that when executed on input $x \in \zo^n$ (corresponding to a set $x \subseteq [n]$) satisfies the following:
\begin{itemize}
\item $I_f$ makes at most $8$ queries to $f$;
\item if $x \cap J = \emptyset$ then $I_f$ returns 0;
\item if $x \cap J \ne \emptyset$ then $I_f$ returns 1 with probability at least $99/100$.
\end{itemize}
In other words, $I_f(x)$ is a probabilistic predicate (with one-sided error) checking if $x$ and $J$ intersect.
\end{claim}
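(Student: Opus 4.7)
The plan is to use the standard influence test based on hybridization: pick a random background string $y\in\zo^n$, and then compare $f(y)$ against $f(w)$ for several $w$'s that agree with $y$ outside the queried set $x$ but are re-randomized on $x$. Concretely, I would query $f(y)$ once at a uniformly random $y$, and then for $j=1,\dots,7$ draw independent uniformly random $z^{(j)}\in\zo^n$, define $w^{(j)}$ by $w^{(j)}_i=y_i$ if $i\notin x$ and $w^{(j)}_i=z^{(j)}_i$ if $i\in x$, and query $f(w^{(j)})$. The procedure outputs $1$ if $f(y)\neq f(w^{(j)})$ for some $j$, and $0$ otherwise. This makes $1+7=8$ queries to $f$.

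For the analysis, write the parity $f$ as $f(u)=\bigoplus_{i\in J}u_i$. For every $j$,
\[
f(y)\oplus f(w^{(j)}) \;=\; \bigoplus_{i\in J\cap x}\bigl(y_i\oplus z^{(j)}_i\bigr),
\]
since the bits indexed by $J\setminus x$ contribute identically to both $f(y)$ and $f(w^{(j)})$. If $x\cap J=\emptyset$, this XOR is empty, hence $f(y)=f(w^{(j)})$ for all $j$ and the procedure correctly outputs $0$; this gives the one-sided error property. If $x\cap J\neq\emptyset$, then for each fixed $y$, the variables $\{z^{(j)}_i\}_{i\in x}$ for $j=1,\dots,7$ are independent uniform bits, so $\bigoplus_{i\in J\cap x}(y_i\oplus z^{(j)}_i)$ is a uniform random bit and these bits are mutually independent across~$j$. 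Hence the probability that all seven indicators equal $0$ is $2^{-7}=1/128 < 1/100$, giving detection probability at least $99/100$ as required.

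There is no genuine obstacle here: the two tricky points are (i) noticing that the simpler single-query test $f(y)$ vs.\ $f(y\oplus x)$ does not work, because when $|x\cap J|$ is even and nonzero the XOR cancels and the test is fooled; re-randomizing the coordinates in $x$ (rather than flipping all of them) fixes this since it makes the relevant XOR uniform whenever $x\cap J\neq\emptyset$. And (ii) reusing a single query $f(y)$ across all seven repetitions is what brings the count down to $8$ rather than $14$; one must check that the seven comparison bits remain independent conditional on $y$, which follows from the independence of the $z^{(j)}$'s. Combining these two observations yields the desired $I_f$.
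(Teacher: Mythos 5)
Your proposal is correct and is essentially the standard influence (independence) test from Fischer et al.\ that the paper invokes for this claim without spelling out a proof: hybridize a random base point on the coordinates in $x$ and compare. Your accounting also checks out---reusing the single query $f(y)$ gives seven independent comparison bits, each uniform when $x\cap J\neq\emptyset$, so the failure probability is $2^{-7}=1/128<1/100$ within the stated $8$ queries, and the $x\cap J=\emptyset$ case is deterministic, giving the one-sided error.
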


The influence test can be made more robust, to handle functions $f$ that are only close to being parities, by increasing the query-complexity (per test) and switching to two-sided error:

\begin{claim}\label{sec:noisy_inf_test} {\bf Noisy influence test\ }
Let $f:\zo^n \to \zo$ be $1/10$-close to a parity function $g:\zo^n \to \zo$ with influential variables $J \subseteq [n]$. There is a probabilistic procedure $I^N_f:\zo^n \to \zo$ that when executed on input $x \in \zo^n$ satisfies the following:
\begin{itemize}
\item $I^N_f$ makes at most $640$ queries to $f$;
\item if $x \cap J = \emptyset$ then $I^N_f$ returns 0 with probability at least $49/50$;
\item if $x \cap J \ne \emptyset$ then $I^N_f$ returns 1 with probability at least $49/50$.
\end{itemize}
In other words, $I^N_f(x)$ is a probabilistic predicate checking if $x$ and $J$ (the influential variables of the parity function $g$ closest to $f$) intersect.
\end{claim}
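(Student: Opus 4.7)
The plan is to run a simple $2$-query ``flip'' test $T = 320$ times independently and threshold the empirical fraction of ``disagree'' outcomes. In each trial, pick $y \in \zo^n$ uniformly at random and pick $y' \in \zo^n$ so that $y'|_{\bar S} = y|_{\bar S}$ and $y'|_S$ is independent uniform on $\zo^S$ (here $S \subseteq [n]$ denotes the coordinates on which the input $x$ is $1$). Query $f(y)$ and $f(y')$ and record $B = \mathbf{1}[f(y)\ne f(y')]$. After $T$ trials, output $1$ iff the empirical mean $\bar B = \frac{1}{T}\sum_j B_j$ exceeds $3/10$; in total this uses $2T = 640$ queries.

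The heart of the analysis is to bound $p := \Pr[f(y) \ne f(y')]$ in each case. Write $f = g \oplus e$, where $e : \zo^n \to \zo$ is the noise indicator with $\Pr_z[e(z) = 1] \le 1/10$. Conditioning on $w := y|_{\bar S}$, the restrictions $y|_S$ and $y'|_S$ are independent uniform, so $p = \mathbb{E}_w[2\alpha_w(1-\alpha_w)]$, where $\alpha_w$ is the density of ones of $f$ on the $w$-slice. Let $\mu_w := \Pr[e=1 \mid w]$, so $\mathbb{E}_w[\mu_w] \le 1/10$ by a marginal computation. When $S \cap J = \emptyset$, the parity $g$ is constant on every slice, forcing $\alpha_w \in \{\mu_w, 1 - \mu_w\}$ and $2\alpha_w(1-\alpha_w) = 2\mu_w(1-\mu_w) \le 2\mu_w$; averaging yields $p \le 2 \mathbb{E}[\mu_w] \le 1/5$. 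When $S \cap J \ne \emptyset$, $g$ is balanced on every slice, and a simple counting argument comparing the number of $0$-to-$1$ and $1$-to-$0$ noise flips gives $|\alpha_w - 1/2| \le \min(\mu_w, 1/2)$, so $2\alpha_w(1-\alpha_w) \ge \max(0, 1/2 - 2\mu_w^2)$. Minimizing the expectation of the right-hand side over distributions of $\mu_w$ with mean at most $1/10$ is a one-variable LP whose optimum places mass $1/5$ at $\mu_w = 1/2$ and $4/5$ at $\mu_w = 0$, giving $p \ge 2/5$.

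With $p \le 1/5$ in one case and $p \ge 2/5$ in the other, the threshold $3/10$ sits at distance $1/10$ from both. Applying~\eqref{eq:chernoff} with $\epsilon = 1/10$ and $m = T = 320$ bounds each of the two error probabilities by $\exp(-2 \cdot 320 \cdot (1/10)^2) = \exp(-6.4) < 1/50$, as required. The most delicate step is the $p \ge 2/5$ lower bound in the intersecting case: slices with $\mu_w > 1/2$ can have $2\alpha_w(1-\alpha_w)$ as small as $0$, and I need to verify that the mean constraint $\mathbb{E}_w[\mu_w] \le 1/10$ prevents such bad slices from pulling the overall expectation below $2/5$. This reduces to the one-variable LP sketched above; everything else is routine bookkeeping together with the Chernoff inequality already stated in the paper.
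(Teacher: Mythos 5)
Your proof is correct, but it takes a genuinely different route from the paper's. The paper robustifies the \emph{oracle} rather than the test: it uses the self-correction property of the Hadamard code, $\Pr_y[g(z)=f(y)\oplus f(y\oplus z)]\ge 1-2\,\dist(f,g)\ge 4/5$, amplifies each decoded value of $g$ by a $40$-fold majority ($80$ queries per value), and then runs the noiseless influence test of Claim~\ref{sec:inf_test} on the decoded values, giving $8\times 80=640$ queries and error at most $1/100+1/100=1/50$ by a union bound. You instead robustify the \emph{test} itself: you estimate the disagreement probability $p=\Pr[f(y)\ne f(y')]$ of the basic two-query flip test and separate the cases by thresholding at $3/10$. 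Your key estimates are sound: conditioning on the slice $w=y|_{\bar S}$ gives $p=\mathbb{E}_w[2\alpha_w(1-\alpha_w)]$; when $x\cap J=\emptyset$ the parity $g$ is constant on each slice, so $2\alpha_w(1-\alpha_w)=2\mu_w(1-\mu_w)\le 2\mu_w$ and $p\le 1/5$; when $x\cap J\ne\emptyset$ the parity is balanced on each slice, $|\alpha_w-\tfrac12|\le\min(\mu_w,\tfrac12)$, and since $\max(0,\tfrac12-2t^2)\ge \tfrac12-t$ pointwise on $[0,1]$, your one-variable LP collapses to $p\ge \tfrac12-\mathbb{E}_w[\mu_w]\ge 2/5$ in one line (your extremal distribution, mass $1/5$ at $\mu_w=1/2$, shows this is tight but tightness is not needed). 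The Chernoff bound~(\ref{eq:chernoff}) with $\epsilon=1/10$ and $m=320$ then gives error $e^{-6.4}<1/50$ on each side, with $2\cdot 320=640$ queries as required. Two cosmetic points: for the non-intersecting case you need the upper-tail twin of~(\ref{eq:chernoff}) (apply it to $1-B_j$), and you should fix the tie-breaking at the threshold (output $1$ iff $\bar B>3/10$, taking $\epsilon$ marginally below $1/10$ if $p$ equals an endpoint exactly); both are immediate and the margin $e^{-6.4}\approx 0.0017\ll 1/50$ absorbs them. A modest advantage of your argument is that it avoids self-correction entirely, using only that $g$ is constant or balanced on the relevant slices, so it adapts to settings where $f$ is close to a non-linear junta-like $g$ with that dichotomy; the paper's route has the advantage of reusing Claim~\ref{sec:inf_test} as a black box.
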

\begin{proof}
We use the self-correction property of the Hadamard code: 
$$
\Pr_{y \in \zo^n}[g(x) = f(y) \oplus f(y \oplus x)] \ge 1-2\cdot \dist(f,g) \ge 4/5.
$$ 
This gives us a procedure to decode $g(x)$ with probability $\geq 4/5$ using 2~queries to~$f$.
Doing this 40 times and taking the majority value correctly decodes $g(x)$ with probability $1-1/800$ using $80$ queries 
(use the Chernoff bound~\ref{eq:chernoff} with $m=40$, $p=0.8$ and $\epsilon=0.3$). 
Hence, by the union bound, any $8$ values (for a single application of the usual influence test) can be decoded correctly with
error probability $1/100$.
Now use the tester from Claim~\ref{sec:inf_test} and observe that the overall error probability is at most $1/100+1/100=1/50$.
\end{proof}

So, prior to testing if $f$ is a $k$-parity, we test it for being a parity function with proximity
parameter $1/10$ and confidence parameter $99/100$. This can be done with a constant number of
        queries using the Linearity Test of~\cite{blr}. If this test fails then we reject; otherwise, we assume $f$ is
$1/10$-close to being a parity function, which results in only a small increase in the failure
probability.

\subsection{Testing in the case where $n=100k^2$}\label{sec:red_test}
In the following test we set $q=\frac{1000}{\rho} k \log k$, with $\rho \in (0,1]$ being a constant defined later.
\begin{itemize}
\item Draw $r_1,\ldots,r_q \in \zo^n$ at random, by setting $r_{ij}$ to $1$ with probability $\rho/k$ for each $i \in [q]$ and $j \in [n]$, independently of the others. For each $j \in [n]$, denote by $S^j \subseteq [q]$ the set of indices $i \in [q]$ with $r_{ij}=1$.
\item Compute $a_i \gets I^N_f(r_i)$ for all $i \in [q]$ with the noisy influence test of Claim~\ref{sec:noisy_inf_test}. For each $j \in [n]$ denote by $S^j_1$ the subset of $S^j$ containing indices $i$ with $a_i=1$.
\item Output the subset $\hat{J} \subseteq [n]$ containing the indices $j$ for which $|S^j_1| > \frac{3}{4}|S^j|$, and \emph{accept} if and only if $|\hat{J}|=k$.
\end{itemize}

The next claim says that with high probability, all influential variables of $g$ (the parity function closest to $f$) are inserted in $\hat{J}$.
\begin{claim} \label{clm:inf}
Let $f:\zo^n \to \zo$ be $1/10$-close to a parity function $g:\zo^n \to \zo$ with influential variables $J \subseteq [n]$.
With probability $1-o(1)$ the following conditions are simultaneously satisfied by the above test:
\begin{itemize}
\item $|S^j| > 100 \log k$ for every $j \in [n]$;
\item $|S^j_1| > \frac{3}{4}|S^j|$ for every $j \in J$.
\end{itemize}
\end{claim}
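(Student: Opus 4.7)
The plan is to establish both conditions by independent Chernoff/Hoeffding tail bounds followed by union bounds, exploiting that the $r_i$'s and the internal coin flips of the different invocations of $I^N_f$ are mutually independent by construction.

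First I would handle the uniform lower bound on $|S^j|$. For each fixed $j \in [n]$, the random variable $|S^j| = \sum_{i=1}^{q} r_{ij}$ is a sum of $q = (1000/\rho) k \log k$ independent Bernoulli variables with parameter $\rho/k$, so its expectation equals $\rho q/k = 1000 \log k$. A multiplicative Chernoff bound of the form $\Pr[X \le (1-\delta)\mu] \le \exp(-\delta^2 \mu/2)$ with $\delta = 9/10$ gives $\Pr[|S^j| \le 100 \log k] \le k^{-c_1}$ for some large constant $c_1$ (concretely, one can take $c_1 \ge 400$). A union bound over all $n = 100 k^2$ coordinates still leaves a failure probability of $O(k^2) \cdot k^{-c_1} = o(1)$.

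Next I would fix $j \in J$ and condition on the entire tuple $(r_1,\dots,r_q)$, hence on $S^j$. By construction every $i \in S^j$ has $r_{ij} = 1$, so $r_i \cap J \ne \emptyset$, and Claim~\ref{sec:noisy_inf_test} gives $\Pr[a_i = 1] \ge 49/50$. Because distinct invocations of $I^N_f$ use fresh internal randomness, the indicators $\{a_i\}_{i \in S^j}$ are, conditional on $(r_1,\dots,r_q)$, mutually independent Bernoullis of bias at least $49/50$. Since $3/4$ is separated from $49/50$ by a positive constant, the bound~(\ref{eq:chernoff}) applied with this constant gap yields $\Pr\bigl[|S^j_1| \le (3/4)|S^j| \,\bigm|\, S^j\bigr] \le \exp(-c_2 |S^j|)$ for an absolute $c_2 > 0$.

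Intersecting with the event from the first step, which guarantees $|S^j| > 100 \log k$, turns this conditional failure probability into $k^{-c_3}$ for a large constant $c_3$, so a union bound over $|J| \le k$ contributes at most $k^{1-c_3} = o(1)$. A final union bound with the first step completes the argument. There is no real obstacle: the only thing to watch is the independence structure (independence of $r_i$'s across $i$, and independence of the coin flips used by distinct calls to $I^N_f$), after which it is a routine two-level concentration estimate that can absorb generous slack in the constants.
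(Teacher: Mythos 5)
Your proposal is correct and follows essentially the same route as the paper's (terse) proof: concentration for $|S^j|$ around its mean $1000\log k$, then Claim~\ref{sec:noisy_inf_test} plus a second concentration bound for the $a_i$'s conditioned on $S^j$, finished with union bounds; your explicit use of the multiplicative Chernoff form for the first item is in fact the right instantiation, since the additive bound~(\ref{eq:chernoff}) alone is too weak when $p=\rho/k$. The only nit is your final union bound ``over $|J|\le k$'': Claim~\ref{clm:inf} makes no assumption on $|J|$ (it is invoked in the correctness argument precisely when $|J|>k$), but this is harmless since union bounding over all $j\in[n]=100k^2$ indices costs only a factor $O(k^2)$, which your exponents absorb.
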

\begin{proof}
The expectation of $|S^j|$ is $1000\log k$, so using the Chernoff bound~\ref{eq:chernoff} we obtain the first item. 
Then use Claim~\ref{sec:noisy_inf_test} and another application of the Chernoff bound to get the second item.
The overall error probability is $o(1/k) = o(1)$.
\end{proof}

The next claim says that when $|J|\le k$, with high probability none of the non-influential
variables are inserted into $\hat{J}$. Before we proceed, let us call an index $i \in [q]$ {\em
  intersecting} with regard to~$J$ if $r_i \cap J \ne \emptyset$.
  Recall that the probability of any one element of $J$ belonging to $r_i$ is $\rho/k$; therefore
  the probability that $i$ is \emph{not} intersecting is $(1-\rho/k)^{|J|} \ge (1-\rho/k)^k \ge 1 -
  \rho$. If we set $\rho = 1/10$, the probability that $i$ is intersecting is at most $1/10$.
\begin{claim}\label{clm:noninf}
Let $f:\zo^n \to \zo$ be $1/10$-close to a parity function $g:\zo^n \to \zo$ with influential variables $J \subseteq [n]$.
If $|J|\le k$, then with probability $1-o(1)$ the following conditions are simultaneously satisfied by the above procedure:
\begin{itemize}
\item $|S^j| > 100 \log k$ for every $j \in [n]$;
\item for every $j \notin J$, the fraction of non-intersecting indices in $|S^j|$ is $>1/2$.
\end{itemize}
\end{claim}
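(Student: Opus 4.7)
The plan is to handle the two bullets separately and then combine them by a union bound over $j$. The first bullet is essentially identical to the corresponding step in Claim~\ref{clm:inf}: for each $j\in[n]$, $|S^j|$ is a sum of $q = 10^4 k \log k$ independent Bernoulli random variables of mean $\rho/k = 1/(10k)$, hence expectation $1000 \log k$. A Chernoff bound shows that $|S^j| \le 100 \log k$ happens with probability $\exp(-\Omega(\log k)) = k^{-\Omega(1)}$, and the implicit constant can be made large enough that a union bound over all $n = 100 k^2$ indices still leaves failure probability $o(1)$.

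For the second bullet, fix an index $j \notin J$, and for each $i \in [q]$ introduce the indicators $A_i = [r_{ij} = 1]$ (so that $S^j = \{i : A_i = 1\}$) and $B_i = [r_i \cap J \ne \emptyset]$ (i.e., $i$ is intersecting). The crucial observation is that, because $j \notin J$, $A_i$ depends only on coordinate $j$ of $r_i$ while $B_i$ depends only on the coordinates of $r_i$ indexed by $J$; hence $A_i$ and $B_i$ are independent, and by the calculation stated just before the claim we have $\Pr[B_i = 1] \le \rho = 1/10$. Consequently, if I condition on the whole vector $(A_i)_{i \in [q]}$ (which fixes $S^j$ and $|S^j|$), the indicators $(1 - B_i)_{i \in S^j}$ remain mutually independent and each is Bernoulli with mean at least $9/10$.

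Given this conditioning I will apply Chernoff with $p = 9/10$, deviation $\epsilon = 2/5$, and $m = |S^j|$: on the $1-o(1)$ event from the first bullet, $m > 100\log k$, so the probability that the non-intersecting fraction within $S^j$ drops below $1/2$ is at most $\exp(-2\cdot (2/5)^2 \cdot 100 \log k) = k^{-\Omega(1)}$. Choosing constants so that this exceeds $k^{-3}$ and union-bounding over the at most $100 k^2$ indices $j \notin J$ completes the proof. The only genuine subtlety is that we are comparing two correlated random quantities $|S^j_1|$ and $|S^j|$; this is handled precisely by the conditioning step, which is legitimate thanks to the coordinate-disjointness observation that $j \notin J$ makes the ``intersecting'' indicators within $S^j$ independent and correctly biased.
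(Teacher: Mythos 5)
Your argument is correct and follows exactly the route the paper intends: the paper dispatches this claim in one line as a ``straightforward application of the Chernoff bound'' together with the bound $n=100k^2$, and your write-up simply fills in the details, with the conditioning-on-$(A_i)$ step (valid because $j\notin J$ makes membership in $S^j$ independent of the intersecting indicators) being the precise justification the paper leaves implicit. The only cosmetic points are that the first bullet really wants the multiplicative form of the Chernoff bound (the additive bound of equation~(1) is too weak when $p=\rho/k$), a minor imprecision the paper itself shares in Claim~\ref{clm:inf}, and that to get the strict inequality ``$>1/2$'' one should take $\epsilon$ slightly below $2/5$.
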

Here, too, the proof follows by straightforward application of the Chernoff bound (and our
upper bound on $n$ as a function of~$k$).

To conclude the correctness of the tester, observe that by Claim \ref{clm:inf}, with high
probability $J \subseteq \hat{J}$, so if $J$ contains more than $k$ indices, then so will $\hat{J}$.
On the other hand, if $|J|\le k$ then by Claim~\ref{clm:noninf}, with high probability all sets
$S^j$ with $j \notin J$ contain a majority of non-intersecting indices. For each non-intersecting $i \in S^j$, it holds that $a_i$ is $0$ with probability $\ge 49/50$.
But in order for~$j$ to belong to $\hat{J}$, at least three quarters of the indices  $i \in
S^j$ must have $a_i = 1$, which implies that at least half of the non-intersecting indices
$i$ of $S^j$ must have $a_i = 1$. As there are at least $|S^j| / 2 > 50 \log k$ non-intersecting indices in
$S^j$, the Chernoff bound implies that this happens with probability at most $k^{-c}$ for some $c > 2$.
Since we are in the case $n=100k^2$, this probability is $o(1/n)$ for each $j\in[n]$, and we can apply the union bound to conclude that the error probability of the tester is $o(1)$.

Note that the test does more than testing: it actually \emph{identifies} the set $J$ of influential variables as long as it is of size $\le k$.

\subsection{Reducing the general case to $n=100k^2$}
Define a random $\ell$-way partition $\Pi$ of $[n]$ by assigning each $i \in [n]$ to one of 
$\ell$ classes uniformly at random (this may result in a partition into fewer than $\ell$ classes).
We can think of $\Pi$ as being constructed sequentially, say in increasing order of $i$.

Note that for any $S \subseteq[n]$, the probability that there there exist two distinct indices $i, j \in S$
assigned to the same class is at most ${ {|S| \choose 2} }/{\ell}$, by the union bound.
In particular, if $\ell = 100 k^2$ and $|S| \le 4k$, this probability is bounded above by $1/10$.

\begin{lemma} Let $k > 100$ and $n > 100k^2$. Given a subset $J \subseteq [n]$ and a $100k^2$-way
partition $\Pi = S_1,\ldots,S_{100k^2}$ of $[n]$, we denote by $N(\Pi,J)$ the number of classes
$S_i$ containing an odd number of elements from $J$. The following holds for randomly constructed
$100k^2$-way partitions $\Pi$:
\begin{itemize}
\item for each $J \subseteq [n]$ of size $|J| \leq k$, $\Pr_\Pi [N(\Pi,J) = |J|] > 9/10$,
\item for each $J \subseteq [n]$ of size $|J| > k$, $\Pr_\Pi [N(\Pi,J) > k] > 9/10$.
\end{itemize}
\end{lemma}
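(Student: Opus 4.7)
The plan is to split on $|J|$ and use two different arguments. When $|J|$ is within a constant factor of $k$, a birthday-style union bound guarantees that every element of $J$ sits in its own class, so all $|J|$ occupied classes are odd. When $|J|$ is much larger than $k$, individual $J$-elements may collide, but the expected number of odd classes is already so much larger than $k$ that Chebyshev's inequality finishes the job.

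For the first bullet, and also for the subcase $k < |J| \le 4k$ of the second bullet, I would argue that with probability greater than $9/10$ no two distinct elements of $J$ fall in the same class, so that $N(\Pi,J)=|J|$. The probability of any such collision is at most ${|J| \choose 2}/\ell$: for $|J|\le k$ this is at most $k^2/(200k^2)=1/200$, and for $|J|\le 4k$ it is at most $(4k)^2/(200k^2)=2/25$, both well below $1/10$.

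The genuinely new case is $m:=|J|>4k$. Write $N(\Pi,J)=\sum_{i=1}^\ell X_i$, where $X_i$ indicates that class $i$ has odd $J$-occupancy $Y_i$. Since each $J$-element contributes a factor $-1$ to $(-1)^{Y_i}$ with probability $1/\ell$ independently, $E[(-1)^{Y_i}]=(1-2/\ell)^m$, so $E[X_i]=(1-(1-2/\ell)^m)/2$ and
\[
E[N(\Pi,J)] \;=\; \frac{\ell}{2}\bigl(1-(1-2/\ell)^m\bigr),
\]
which is increasing in $m$. Evaluating at $m=4k$ via $(1-2/\ell)^{4k}\le e^{-2/(25k)}$ and a second-order Taylor expansion gives $E[N(\Pi,J)] \ge 4k-O(1) \ge 3.9\,k$ once $k>100$. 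The analogous two-class computation (for $i\ne j$) yields
\[
E[X_iX_j] \;=\; \frac{1}{4}\bigl(1-2(1-2/\ell)^m+(1-4/\ell)^m\bigr),
\]
and comparing $(1-4/\ell)^m$ with $(1-2/\ell)^{2m}=(1-4/\ell+4/\ell^2)^m$ shows $\mathrm{Cov}(X_i,X_j)\le 0$. Hence $\mathrm{Var}(N)\le\sum_i E[X_i]=E[N]$, and Chebyshev's inequality gives
\[
\Pr[N(\Pi,J)\le k] \;\le\; \frac{E[N]}{(E[N]-k)^2}.
\]
This bound is decreasing in $E[N]$ on $(k,\infty)$, so plugging in $E[N]\ge 3.9\,k$ yields $3.9k/(2.9k)^2=O(1/k)$, comfortably below $1/10$ for $k>100$.

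The only step requiring real care is the variance bound, which rests on the negative-correlation inequality $\mathrm{Cov}(X_i,X_j)\le 0$; fortunately this in turn reduces to the transparent comparison $(1-4/\ell)^m \le (1-4/\ell+4/\ell^2)^m$. Everything else is routine first-order Taylor arithmetic that uses $k>100$ to absorb the lower-order terms.
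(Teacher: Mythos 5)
Your proof is correct, and for the second bullet it takes a genuinely different route from the paper. The first bullet (and your extension of the birthday bound up to $|J|\le 4k$) matches the paper's argument. For $|J|>k$ the paper avoids any moment computation: it reveals the random assignments of all elements of $J$ except the last $k+1$; if at that point more than $2k+2$ classes are odd, the remaining $k+1$ placements cannot bring $N(\Pi,J)$ down to $k$, and otherwise a single birthday-style union bound over at most $3k+2$ relevant elements (the $k+1$ unplaced ones together with representatives of the odd classes) shows that, with probability at least $9/10$, the last $k+1$ elements land in distinct, currently-even classes, forcing $N(\Pi,J)\ge k+1$. Your alternative for $|J|>4k$ — exact computation of $E[N]=\tfrac{\ell}{2}\bigl(1-(1-2/\ell)^{m}\bigr)$, the pairwise formula $E[X_iX_j]=\tfrac14\bigl(1-2(1-2/\ell)^m+(1-4/\ell)^m\bigr)$, the negative-correlation observation $(1-4/\ell)^m\le(1-2/\ell)^{2m}$, and Chebyshev with $\mathrm{Var}(N)\le E[N]$ — is sound (I checked the monotonicity of $E/(E-k)^2$ in $E$ and the estimate $E[N]\ge 4k-O(1)$ at $m=4k$), and it even yields a quantitatively stronger $O(1/k)$ failure probability plus genuine concentration of $N$, which could be reused elsewhere. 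What it costs is length and machinery: the paper's sequential argument needs nothing beyond union bounds and the observation that $k+1$ further placements change $N$ by at most $k+1$, whereas yours requires the parity-generating-function identities and a covariance comparison. Both are complete; yours is more mechanical and more general, the paper's is shorter and purely combinatorial.
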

\begin{proof} 
If $|J| \leq k$, then with probability $\ge 9/10$ no pair of indices from $J$ belong to the same partition class, and hence $N(\Pi,J) = |J|$.

Now let $|J| > k$. Consider  the stage in the construction of the random partition $\Pi$ where all
but the last $k+1$ elements from $J$ were mapped to one of $\Pi$'s classes. If at this stage
$N(\Pi,J) >2k+2$ then we are done (since adding $k+1$ indices from $J$ can only change $N(\Pi,J)$ by
        $k+1$). Otherwise, we know that with probability at least $9/10$ no pair from a set of $\le 3k+2$
elements collides when randomly mapped to $100k^2$ classes. Therefore with probability at least
$9/10$, the last $k + 1$
elements are put into classes different from one another, and also into classes that so far contained an
even number of elements of~$J$.
\end{proof}

Once such a partition is obtained, we can simulate access to a function $f':\zo^{100k^2} \to \zo$ by querying $f$ on inputs that are constant within each partition class, and reduce the original problem of testing~$f$ to the problem of testing whether~$f'$ is a $k$-parity.

Putting everything together, we have proved our upper bound:
\begin{theorem}
There exists a non-adaptive tester that uses $O(k\log k)$ queries to a given function $f:\zo^n\rightarrow\zo$,
and decides with probability at least $2/3$ whether $f$ is in or far from $\PAR_k^n$.
\end{theorem}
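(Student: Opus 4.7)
The plan is to glue together the three ingredients developed earlier in this section into a single non-adaptive tester. First, dispose of the case $k = O(1)$ by invoking the constant-query isomorphism tester of Fischer et al.~\cite{juntas}, since for constant $k$ any query bound depending only on $k$ is trivially $O(k \log k)$. From now on assume $k = \omega(1)$.

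Stage~1 runs the BLR linearity test~\cite{blr} with proximity $1/10$ using $O(1)$ queries; if it rejects, the tester rejects. Otherwise, at the cost of a small additive error term, we may assume $f$ is $1/10$-close to some parity $g$ with influential set $J \subseteq [n]$, so that the noisy influence test of Claim~\ref{sec:noisy_inf_test} applies. Stage~2 reduces $n$ down to $100k^2$ by sampling a uniformly random $100k^2$-way partition $\Pi$ of $[n]$ and implicitly working with the function $f':\zo^{100k^2}\to\zo$ obtained by identifying all variables within each class of $\Pi$. Each query to $f'$ is implemented by a single query to $f$ on the constant-on-classes input, so non-adaptivity and the query bound are preserved. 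The parity $g'$ closest to $f'$ has influential set $J'$ of size $N(\Pi,J)$, and by the partition lemma proven above we have, with probability $\ge 9/10$, $|J'| = |J|$ when $|J|\le k$ and $|J'|>k$ when $|J|>k$; thus $|J|=k$ iff $|J'|=k$.

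Stage~3 invokes the tester from Section~\ref{sec:red_test} on $f'$. Its analysis has already been carried out: Claim~\ref{clm:inf} guarantees $J' \subseteq \hat J$ with probability $1-o(1)$, while Claim~\ref{clm:noninf} combined with a Chernoff bound shows that each individual $j \notin J'$ ends up in $\hat J$ with probability $k^{-c}$ for some $c>2$. A union bound over the $100k^2$ indices then yields $\hat J = J'$ with high probability in the case $|J'|\le k$, so acceptance (the event $|\hat J|=k$) is equivalent to $|J'|=k$. Summing the constantly many $o(1)$ failure events and adjusting per-stage confidence constants keeps the total error below $1/3$; the total query count is $O(1) + O(k\log k) = O(k\log k)$.

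The main obstacle I anticipate is exactly the union-bound step in Stage~3: driving the per-variable false-positive probability below $1/(100k^2)$ is what forces both the $\Theta(\log k)$ samples $r_i$ contributing to each $S^j$ and the $\frac{3}{4}$-majority threshold in the definition of $\hat J$. Everything else — the linearity preprocessing, the random-partition reduction, and the additive combination of error probabilities — is a routine Chernoff-plus-union-bound argument once the sampling density $\rho/k$ is tuned so that a random $r_i$ intersects $J$ with probability bounded away from both $0$ and $1$.
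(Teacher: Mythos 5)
Your proposal follows essentially the same route as the paper's own proof: BLR linearity preprocessing so that the noisy influence test of Claim~\ref{sec:noisy_inf_test} applies, the random $100k^2$-way partition lemma to reduce to $n=100k^2$ while preserving whether the influential set has size $k$, and then the $O(k\log k)$-query identification of $\hat J$ via Claims~\ref{clm:inf} and~\ref{clm:noninf} with the Chernoff-plus-union-bound analysis over the $100k^2$ variables. The argument and the error accounting match the paper's, so no changes are needed.
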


\section{Lower bounds}\label{sec:lower}

\subsection{The one-way communication complexity of $k$-disjointness}

In two-party communication complexity~\cite{yao:distributive,comm_book}, two parties (Alice and Bob)
    have inputs~$x$ and $y$, respectively, and want to compute some function of $x$ and $y$.
Unlimited access to their respective inputs and arbitrary computations are allowed, and the measure  for the protocol's efficiency is the number
of bits of communication they need to transmit to each other. We consider the model where Alice and Bob share
a common source of randomness (``public coin'') and are allowed to err with probability at most $1/3$.

In the \emph{$k$-disjointness} problem, Alice and Bob receive two $k$-sets $x, y \in {[n] \choose k}$ and would like to
determine if $x \cap y = \emptyset$ or not. This problem is known to have communication complexity $\Theta(k)$. The upper bound is due to
H{\aa}stad and Wigderson~\cite{disjointnessUB}, the lower bound to Kalyanasundaram and Schnitger,
and subsequent simplifications and strengthenings were found by Razborov~\cite{disjointnessLB2} and Bar-Yossef et al.~\cite{disjointnessLB3}.
The H{\aa}stad-Wigderson protocol is interactive (i.e., it uses many rounds of communication), and we show here this is
actually necessary: if we just allow one-way communication from Alice to Bob, then the lower bound goes up from $\Omega(k)$ to $\Omega(k\log k)$ bits.
As mentioned in the introduction, this same bound was also independently observed in~\cite{patrascu:disjblog,dks:sparsedisj}.

\begin{theorem}\label{thm:onewaycclowerbound}
The one-way communication complexity of the $k$-disjointness problem is $\Theta(k \log k)$
for $k \le \sqrt{n/2}$, and $\Theta(\log{n\choose k})$ for $k>\sqrt{n/2}$.
The lower bound applies to the special case of \emph{unique} $k$-disjointness, where the inputs
satisfy either $x \cap y = \emptyset$ or $|x \cap y| = 1$.
\end{theorem}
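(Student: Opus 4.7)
The plan is to give matching upper and lower bounds in two regimes separated by $k\approx\sqrt{n/2}$. The $O(\log{n\choose k})$ upper bound is immediate: Alice simply transmits $x$. For the $O(k\log k)$ bound when $k\le\sqrt{n/2}$, I would have Alice and Bob sample from public randomness a pairwise-independent hash function $h:[n]\to[Ck^2]$ with $C$ a suitably large constant. Since $|x\cup y|\le 2k$, a union bound over pairs shows $h$ is injective on $x\cup y$ with probability at least $9/10$, and in that case $h(x)\cap h(y)=\emptyset$ iff $x\cap y=\emptyset$. Alice sends $h(x)$ as a set of size at most $k$ in $[Ck^2]$ using $O(k\log k)$ bits, Bob computes $h(y)$ and outputs whether $h(x)\cap h(y)$ is empty, and a constant number of independent repetitions amplifies the success probability to $2/3$.

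For the lower bound on one-way unique $k$-disjointness, I would reduce from the indexing problem, whose randomized public-coin one-way complexity is $\Omega(N)$ when Alice holds $a\in\zo^N$, Bob holds $i\in[N]$, and Bob must output $a_i$ with probability $\ge 2/3$. Fix a power of two $m$ to be chosen later and work inside a sub-universe of $[n]$ of size $2mk$ partitioned into $k$ blocks of size $2m$, with each block split into a ``top half'' and ``bottom half'' of size $m$. Alice parses $a\in\zo^{k\log m}$ as $(\alpha_1,\ldots,\alpha_k)\in[m]^k$ and takes her disjointness input to be $x=\{(j,\alpha_j):j\in[k]\}$, entirely in top halves. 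Bob, on indexing query $(j^\star,b)\in[k]\times[\log m]$ requesting the $b$-th bit of $\alpha_{j^\star}$, lets his set $y$ consist of the $m/2$ top-half positions of block $j^\star$ whose label $v\in[m]$ has $b$-th bit equal to $1$, padded with $k-m/2$ arbitrary bottom-half positions. Because $x$ lives in top halves and the padding in bottom halves, $x\cap y$ comes only from block $j^\star$ and equals $\{(j^\star,\alpha_{j^\star})\}$ or $\emptyset$ depending on whether the $b$-th bit of $\alpha_{j^\star}$ is $1$ or $0$; this is a valid instance of unique $k$-disjointness. Any $c$-bit one-way protocol for unique $k$-disjointness therefore yields a $c$-bit one-way indexing protocol on $N=k\log m$ bits, so $c=\Omega(k\log m)$.

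It then remains to tune $m$. For $k\le\sqrt{n/2}$, I would take $m=k$, giving a sub-universe of size $2k^2\le n$ and the desired $\Omega(k\log k)$ bound. For $k>\sqrt{n/2}$, I would take $m$ to be the largest power of two with $2mk\le n$, so $m=\Theta(n/k)$; this regime guarantees $m/2\le k$ (which is needed so that the padding count $k-m/2$ is non-negative), and the resulting lower bound is $\Omega(k\log(n/k))=\Omega(\log{n\choose k})$, matching the trivial upper bound up to constants. The central technical friction is exactly this padding constraint $m/2\le k$: it forces the two-regime split and is the combinatorial reason the theorem takes its piecewise form, while the remaining steps are routine Chernoff-style or counting calculations.
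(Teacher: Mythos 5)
Your lower-bound argument is essentially the paper's: Alice's $k$-set encodes $k$ integers, one per block, and Bob isolates one bit of one integer by taking the half of the labels in that block whose chosen bit equals $1$ (plus padding), so that a one-way protocol for unique $k$-disjointness yields a one-way protocol recovering arbitrary bits of Alice's string. The only difference in the final step is cosmetic: you invoke the $\Omega(N)$ one-way lower bound for indexing, while the paper invokes Nayak's random-access-code bound $q\ge(1-H(\delta))m$; for classical public-coin one-way protocols these are the same statement (the paper's choice is what lets it also conclude a quantum one-way bound). Your upper bound for $k\le\sqrt{n/2}$ --- pairwise-independently hash $[n]$ into $O(k^2)$ cells and send the hashed $k$-set in $O(k\log k)$ bits --- differs from the paper's protocol (random bucketing into $O(k^2)$ buckets plus an $O(\log k)$-bit public-coin equality test per non-empty bucket), but it is correct; note the error is already one-sided and at most $1/10$ without any repetition, since a hash collision can only create spurious intersections.

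One concrete shortfall: your parameterization does not cover the top of the large-$k$ regime. Because you split each block into top and bottom halves, you need $2mk\le n$ with $m\ge 2$ for the indexing instance to carry any information, i.e.\ $k\le n/4$; for $n/4<k\le n/2$ your bound degenerates to nothing, while the theorem still asserts $\Omega(\log{n\choose k})=\Omega(n)=\Omega(k)$ there. (The paper's sketch for $k>\sqrt{n/2}$ has the same tension --- Bob's padding must sit where Alice provably has no $1$s, which forces Alice's offsets into a proper sub-range of each block --- so this corner needs a word in either write-up; the easiest patch is to observe that for $k=\Theta(n)$ one already has the general $\Omega(k)$ lower bound for unique disjointness, or to re-balance the constants so Alice's offset range and Bob's padding area coexist inside blocks of size $n/k$.) Apart from this boundary issue, and the routine requirement that $m$ be rounded to a power of two, your proof establishes the stated bounds.
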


\begin{proof}
{\bf Upper bound.}
First note that Alice can just send Bob the index of her input~$x$ in the set of all weight-$k$ strings of length~$n$, at the expense of $\log{n\choose k}$ bits. If $k\le\sqrt{n/2}$ then we can do something better, as follows.
Alice and Bob use the shared randomness to choose a random partition of their inputs into $b=O(k^2)$
buckets, each of size $n/b$. By standard birthday paradox arguments, with probability close
to~1 no two 1-positions in~$x$ will end up in the same bucket, and no two 1-positions in~$y$ will
end up in the same bucket. We assume in the remainder of the upper-bound argument that the bucketing
has succeeded in this sense. Note that $x$ and $y$ intersect iff there is an $i\in[b]$ such that Alice and Bob's
strings in the $i$th bucket are equal and non-zero. For each of her $k$ non-empty buckets, Alice
sends Bob the index of that bucket, and uses the well-known public-randomness equality protocol~\cite[Example~3.13]{comm_book} on
that bucket: they choose $2\log k$ uniformly random strings $r_1,\ldots,r_{2\log
  k}\in\zo^{n/b}$ and Alice sends over the inner products (mod~2) of her bucket with each of
  those strings. Bob compares the bits he received with the inner products of $r_1,\ldots,r_{2\log
    k}$ with his corresponding bucket. If their two buckets are the same then all inner
    products will be the same, and if their two buckets differ in at least one bit-position then
    they will see a difference in those inner products, except with probability $1/2^{2\log
      k}=1/k^2$. Bob checks whether one of Alice's non-empty buckets equals his corresponding bucket. If so he concludes that $x$ and~$y$ intersect, and otherwise he concludes that they are disjoint.  Taking the union bound over the probability that the bucketing fails and the probability that one of the $k$ equality tests fails, shows that the error probability is close to~0.  The communication cost of this one-way protocol is $O(\log k)$ bits for each of Alice's non-empty buckets, so $O(k\log k)$ bits in total.

{\bf Lower bound.}
First consider the case $k \le \sqrt{n/2}$. Let $x$ be Alice's input, viewed as an $n$-bit string of Hamming weight $k$.
For Alice we restrict our attention to inputs of a particular structure. Namely, partition $[n]$ into $k$ consecutive
sets of size $n / k \ge 2k$. The inputs we allow contain precisely one bit set to~1 inside each block of the partition, and
moreover the offset of the unique index set to one within the $i$th block is an integer in $\{0,\ldots,2k - 1\}$.
In this case, $x$ describes a message $M$ of $k$ integers $m_1,\dots,m_{k}$,
each in the interval $\{0,\dots,2k-1\}$. $M$ can also be viewed as an $m$-bit long message, where
$m=k\log(2k)$. We can write Alice's input as $x=u(m_1)\dots u(m_{k})$, where $u(m_i) \in \zo^{n/k}$ is the unary
expression of the number $m_i$ using $n/k$ bits (where the rightmost $n/k-k$ bits of each $u(m_i)$ are always zero).
For instance, the picture below illustrates the case where $n=40$, $k=4$, and $M=(1,7,0,5)$:
$$
x=
\overbrace{\underbrace{0100000000}_{u(m_1)}}^{n/k}\
\overbrace{\underbrace{0000000100}_{u(m_2)}}^{n/k}\
\overbrace{\underbrace{1000000000}_{u(m_3)}}^{n/k}\
\overbrace{\underbrace{0000010000}_{u(m_4)}}^{n/k}
$$
Let $\rho_x$ be the $q$-bit message that Alice sends on this input~$x$; this $\rho_x$ is a random variable, depending on the public coin.
Below we show that the message is a \emph{random-access code} for $M$, i.e., it allows a user to recover each bit of $M$
with probability at least $1-\delta$ (though not necessarily \emph{all} bits of $M$ simultaneously).
Then our lower bound will follow from  Nayak's random-access code lower bound~\cite{nayak:qfa}. This says that
$$
q\geq(1-H(\delta))m,
$$
where $\delta$ is the error probability of the protocol and $H(\delta)=-\delta\log(\delta)-(1-\delta)\log(1-\delta)$ is its binary entropy.

Suppose Bob is given $\rho_x$ and wants to recover some bit of~$M$.
Say this bit is the $\ell$th bit of the binary expansion of~$m_i$.
Then Bob completes the protocol using the following~$y$:
$y$ is 0 everywhere except on the $k$ bits in the $i$th  block of size $n/k$ whose offsets~$j$
(measured from the start of the block) satisfy the following:  $0 \le j < 2k$ and the $\ell$th bit of the binary expansion of~$j$ is~1.
The Hamming weight of $y$ is $k$ by definition.

Recall that Alice has a $1$ in block~$i$ only at position~$m_i$.
Hence $x$ and $y$ will intersect iff the $\ell$th bit of the binary expansion of $m_i$ is~1,
and moreover, the size of the intersection is either~0 or~1.
Running a $k$-disjointness protocol with success probability $1-\delta$ will now give Bob the
sought-for bit of $M$ with probability at least $1-\delta$, which shows that~$\rho_x$ is a random-access code for $M$.

If $k>\sqrt{n/2}$ then we can do basically the same lower-bound proof, except that the integers~$m_i$ are now
in the interval $\{0,\dots,n/k-1\}$, $m=k\log(n/k)$, and Bob puts only $n/2k<k$ ones in the $i$th block of~$y$
(he can put his remaining $k-n/2k$ indices somewhere at the end of the block, at an agreed place where Alice won't put 1s).
This gives a lower bound of $\Omega(k\log(n/k))=\Omega\big(\log{n\choose k}\big)$.
\end{proof}

We note that the lower bound holds even for \emph{quantum} one-way communication complexity, and even if we allow Alice and Bob to share unlimited quantum \emph{entanglement} before the protocol starts.  For the latter case Nayak's random access-code lower bound~\cite{nayak:qfa} needs to be replaced with Klauck's~\cite{klauck:qpcom} version, which is weaker by a factor of two.

\subsection{Non-adaptive lower bound for testing $k$-parities}

In a recent paper, Blais, Brody and Matulef \cite{bbm} made a clever connection between property testing and some well-studied problems in communication complexity. As one of the applications of this connection, they  used the $\Omega(k)$
lower bound for $k$-disjointness to prove an $\Omega(k)$ lower bound on testing whether a function is in or far from the class of $k$-parities.
We use their argument to get a better lower bound for \emph{non-adaptive} testers.

\begin{corollary}\label{coro_smp}
Let $1\leq k\leq n$.
If $k\leq\sqrt{n/2}$ then non-adaptive testers need at least $\Omega(k \log k)$ queries to test with success probability at least $2/3$
whether a given function $f:\zo^n\rightarrow\zo$ is in or far from $\PAR_k^n$;
and if $k>\sqrt{n/2}$ then they need at least $\Omega(\log{n\choose k})$ queries.
\end{corollary}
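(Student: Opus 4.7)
The plan is to deploy the Blais--Brody--Matulef reduction from property testing to communication complexity in its natural one-way form, combined with Theorem~\ref{thm:onewaycclowerbound}. Concretely, I will show that a non-adaptive $q$-query tester for $\PAR_k^n$ with success probability $2/3$ induces a one-way protocol with the same success probability for unique $k'$-disjointness with $k'=\lfloor k/2\rfloor$, using only $q$ bits of communication; the corollary then follows by reading off the matching lower bound on $q$ from Theorem~\ref{thm:onewaycclowerbound}.

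For the embedding I will assume $k$ is even (odd $k$ costs only constant factors, since $\log(k\pm1)=\Theta(\log k)$) and set $k'=k/2$. Alice holds $x\in\zo^n$ of weight $k'$, Bob holds $y\in\zo^n$ of weight $k'$, under the promise that either $x\cap y=\emptyset$ or $|x\cap y|=1$. They jointly define $f_{x,y}(z):=\chi_x(z)\oplus\chi_y(z)=\chi_{x\oplus y}(z)$. Since $|x\oplus y|=2k'-2|x\cap y|=k-2|x\cap y|$, disjointness gives $f_{x,y}\in\PAR_k^n$, while a unique intersection gives $f_{x,y}\in\PAR_{k-2}^n$. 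The key observation is that two distinct parities differ on exactly half of $\zo^n$, so any $(k-2)$-parity is at (relative) distance $1/2$ from every $k$-parity, hence far from $\PAR_k^n$ under the $1/10$ proximity parameter. A non-adaptive tester picks its query points $z_1,\dots,z_q$ from public randomness alone, so Alice can send the $q$ bits $a_i=\chi_x(z_i)$ in a single message; Bob recovers $f_{x,y}(z_i)=a_i\oplus\chi_y(z_i)$ locally and outputs the tester's verdict. This is exactly a $q$-bit one-way protocol inheriting the tester's success probability.

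To finish, I invoke Theorem~\ref{thm:onewaycclowerbound} with parameter $k'=k/2$. When $k\le\sqrt{n/2}$, $k'\le\sqrt{n/2}$, so $q=\Omega(k'\log k')=\Omega(k\log k)$. When $k>\sqrt{n/2}$ I split into two subcases: if $k'\le\sqrt{n/2}$ then $k=\Theta(\sqrt n)$ and the bound $\Omega(k\log k)=\Omega(\sqrt n\log n)$ coincides with $\Theta\bigl(\log\binom nk\bigr)$; if $k'>\sqrt{n/2}$ the theorem directly yields $q=\Omega\bigl(\log\binom{n}{k'}\bigr)=\Omega\bigl(\log\binom nk\bigr)$, using the elementary estimate that $\log\binom{n}{k'}$ and $\log\binom nk$ are both $\Theta(k\log(n/k))$ throughout the range $k\le n/2$. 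The main conceptual content lives in the middle paragraph---verifying that ``disjoint'' versus ``uniquely intersecting'' maps cleanly onto the two sides of the $\PAR_k^n$ testing promise, via the weight identity for $x\oplus y$ and the fact that distinct parities are $1/2$-apart; the closing case analysis on binomials is only a routine manipulation.
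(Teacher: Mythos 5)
Your proposal is correct and follows essentially the same route as the paper: encode the unique $k/2$-disjointness inputs as $\chi_x$ and $\chi_y$, observe $\chi_{x\oplus y}$ is a $k$-parity iff the sets are disjoint and otherwise a smaller parity at distance $1/2$ from every $k$-parity, use the public coin to fix the non-adaptive tester's queries so Alice's single $q$-bit message suffices, and then invoke Theorem~\ref{thm:onewaycclowerbound}. The only difference is cosmetic: you spell out the translation of parameters from $k/2$ back to $k$ (including the boundary case $k'\leq\sqrt{n/2}<k$), which the paper leaves implicit in ``the communication lower bound then implies the result.''
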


\begin{proof}
Let $k$ be even (a similar argument works for odd $k$).
Below we show how Alice and Bob can use a non-adaptive $q$-query tester for $k$-parities
to get a one-way public-coin communication complexity for $k/2$-disjointness with $q$ bits of communication.
The communication lower bound of Theorem~\ref{thm:onewaycclowerbound} then implies the result.

Alice forms the function $f = \chi_x$ and Bob forms the function $g = \chi_y$. Consider the function $h = \chi_{x \oplus y}$. Since
$|x \oplus y| = |x| + |y| - 2 |x \cap y|$ and $|x|+|y|=k$, the function $h$ is a $(k-2|x\cap y|)$-parity.
A $q$-query randomized tester is a probability distribution over $q$-query deterministic testers.
Alice and Bob use the public coin to jointly sample one of those deterministic testers.
Since the tester is non-adaptive, this fixes the $q$ queries that will be made.
For every such query $z \in \zo^n$, Alice sends Bob the bit $f(z)$.
This enables Bob to compute $h(z) = f(z) \oplus g(z)$ for all~$q$ queries and then to finish the
computation of the tester. Since an $\ell$-parity with $\ell<k$ has distance 1/2 from every $k$-parity, the tester will tell Bob whether $h$ is a $k$-parity or far from a $k$-parity; 
i.e., whether $x$ and $y$ intersect or not.
\end{proof}

As mentioned in the introduction, a lower bound for testing membership in $\PAR_k^n$ implies a lower bound for $\PAR_{\le k}^n$ and juntas.


\section{Conclusion and future work}

We end with a few comments and directions for future research:
\begin{itemize}
\item While our disjointness lower bound (Theorem~\ref{thm:onewaycclowerbound}) also applies to one-way \emph{quantum} protocols, our lower bound for testing (Corollary~\ref{coro_smp}) does not. The reason is that the overhead when turning a quantum tester into a communication protocol will be $O(n)$ qubits per query in the quantum case,
in contrast to the $O(1)$ bits per query in the classical case.
In fact, if $f$ is a $k$-parity then the Bernstein-Vazirani algorithm~\cite{bernstein_vazirani} finds $x$ itself using only one quantum query, so testing for $k$-parities is trivial for quantum algorithms.
\item For \emph{adaptive} testers for $k$-parities there is still a gap between the best lower bound of  $k-o(k)$ queries~\cite{blais&kane:klinearity},
and the best upper bound of $O(k \log k)$ queries. It would be interesting to close this gap.
\end{itemize}

\subsection*{Acknowledgements}
We thank Pavel Pudlak for asking a question (communicated via HB) related to lower bounds for monotone circuits and span programs, which prompted RdW to prove the lower bound part of Theorem~\ref{thm:onewaycclowerbound} in October 2006; this proof was unpublished until now. We thank Joshua Brody for referring us to~\cite{dks:sparsedisj}, Anirban Dasgupta for sending us a copy of that paper, David Woodruff for a pointer to~\cite{patrascu:disjblog}, and the anonymous referees for many useful comments and references.

\bibliographystyle{alpha}
\bibliography{fiso}

\end{document}